\title{\mbox{Controlling a Random Population is \EXPTIME-hard}}
\titlerunning{}
\author{Corto Mascle}{ENS Paris-Saclay, FR} {corto.mascle@ens-paris-saclay.fr}{}{}
\author{Mahsa Shirmohammadi}{CNRS, IRIF, Universit\'e de Paris, FR}{mahsa@irif.fr}{}{}
\author{Patrick Totzke}{University of Liverpool, UK}
{totzke@liverpool.ac.uk}{http://orcid.org/0000-0001-5274-8190}{}
\authorrunning{C.~Mascle, M.~Shirmohammadi, P.~Totzke}
\keywords{Markov Decision Processes, Synchronization}
\newcommand{\mypar}[1]{\vspace{-1em}\subparagraph*{#1}}
\begin{document}
\maketitle
\begin{abstract}
    Bertrand et al.~\cite{BDGGG2019} describe two-player zero-sum games in which one player tries to achieve a reachability objective in $n$ games (on the same finite arena) simultaneously by broadcasting actions, and where the opponent has full control of resolving non-deterministic choices. They show \EXPTIME\ completeness for the question if such games can be won for every number $n$ of games.

We consider the \emph{almost-sure} variant in which the opponent randomizes their actions, and where the player tries to achieve the reachability objective eventually with probability one.
The lower bound construction in \cite{BDGGG2019} does not directly carry over to this randomized setting.\footnote{The construction in Theorem~6.1 would allow spurious wins for the controller by alternating actions \texttt{init} and \texttt{reset} until an incorrect initialization of the ATM is reached. This must happen eventually with probability one and can subsequently be exploited by controller to win directly.}
In this note we show \EXPTIME\ hardness for the almost-sure problem by reduction from
Countdown Games.
\end{abstract}

\section{Definitions}
\subparagraph*{Population Control.} 
Write 
$\?M=(Q,\act,\prob)$
for a Markov Decision Process,
where
$Q$ and $\act$, are finite sets of \emph{states}
and \emph{actions},
and $\prob:Q\x\act \to \dist(Q)$ assigns to each state and action
a probability distribution over states.
A \emph{successor} of~$q$ on action~$a$ is a state $p$ with $\delta(q,a)(p)>0$.
The $n$-fold synchronized product of $\?M$ is the MDP 
$\?M^{(n)}=(Q^{(n)},\act,\prob)$
whose states, called \emph{configurations} here, 
are $n$-dimensional vectors with components in $Q$,
and $\delta$ is lifted to $Q^{(n)}$ in the natural way:
$\prob(\vec{q},a)(\vec{p})=\prod_{i=0}^{n-1}\prob(\vec{q}(i),a)(\vec{p}(i))$
for all $\vec{q},\vec{p}\in Q^{(n)}$ and $a\in\Sigma$.
A \emph{strategy} $\sigma:Q^{(n)}\to \act$ assigns to every configuration~$\vec{q}$ an action\footnote{
We consider here only memoryless deterministic strategies
as those suffice for almost-sure reachability problems \cite{Puterman:book}.
}.
For every initial state $\vec{q}\in Q^{(n)}$, such a strategy
induces a probability space $(\Omega,\Prob{\vec{q}}^\sigma)$ over all infinite sequences in~$\Omega=\vec{q}(Q^{(n)})^\omega$
(see \cite{Puterman:book} for details).

In a configuration $\vec{q}\in Q^{(n)}$, we say $m$ components \emph{mark} state $p\in Q$ if the number of different indices $0\le i <n$ with $\vec{q}(i) = p$ is equal to~$m$. In case $m>0$ we simply call the state $p$ \emph{marked} in~$\vec{q}$.
Let $\mathbf{{Start}}, \mathbf{End}\in Q^{(n)}$ denote the configurations
in which all $n$ components mark a designated initial (or final, resp.) state of $\?M$. 
A configuration $\vec{q}\in Q^{(n)}$ 
\emph{can be synchronized} if there exists a strategy $\sigma$ such that the probability $\Prob{\vec{q}}^{\sigma}((Q^{(n)})^*\textbf{End})$, of eventually visiting $\mathbf{End}$, is one. 
$\?M^{(n)}$ \emph{can be synchronized} if~$\textbf{Start}$ can be synchronized.

Given an MDP~$\?M$ equipped with  initial and  final states, the
\textsc{PopulationControl} problem asks whether $\?M^{(n)}$ can be synchronized for every $n\in\+N$.
\mypar{Countdown Games.}
A \emph{Countdown Game}
is given by a directed graph $\?G=(V,E)$,
where edges carry positive integer weights, 
$E\subseteq (V\x\+N_{>0}\x V)$. 
For an initial pair $(v,c_0)\in V\x\+N$ of a vertex and a number, 
two opposing players (Player~1 and~2)
alternatingly determine a sequence of such pairs as follows.
In each round, from $(v,c)$,
Player~1 picks a number $d\le c$ such that $E$ contains at least one edge $(v,d,v')$;
then Player~2 picks one such edge and the game continues from $(v',c-d)$.
Player~1 wins the game iff the play reaches a pair in $V\x\{0\}$.

\textsc{CountdownGame} is the decision problem which asks if Player~1 has a strategy to win a given game for a given initial pair $(v_0,c_0)$. 
All constants in the input are written in binary.
\color{black}

\begin{proposition}[Thm.~4.5 in \cite{JLS2007}]
    \label{lemma}
    \textsc{CountdownGame} is \EXPTIME-complete.
\end{proposition}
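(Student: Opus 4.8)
Given a countdown game $\?G=(V,E)$ and an initial pair $(v_0,c_0)$ with $c_0$ written in binary, every reachable configuration is a pair $(v,c)$ with $v\in V$ and $0\le c\le c_0$, hence there are at most $|V|\cdot(c_0+1)\le 2^{\mathrm{poly}(|\?G|)}$ of them. Building this arena explicitly, a round is an ordinary alternation (Player~1 picks $d$, then Player~2 picks a matching edge), and ``Player~1 wins from $(v_0,c_0)$'' amounts to reachability of $V\x\{0\}$ in a finite turn-based game, solvable by attractor computation in time polynomial in the arena, hence in $\EXPTIME$ overall.

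\textbf{Hardness, setup.}
Since $\EXPTIME$ coincides with alternating polynomial space, it suffices to reduce in polynomial time from the word problem of a fixed alternating Turing machine $M$ using space $p(|x|)$ for a polynomial $p$. Normalise $M$ so that its tape alphabet has size $k$, all computation branches have the same length $T=2^{\mathrm{poly}(|x|)}$, and accepting branches end in one designated configuration $c_f$ with $\mathrm{enc}(c_f)=0$. Encode a configuration $c$ of $M$ (tape contents, head position, control state) as a number $\mathrm{enc}(c)<B$ with $B=2^{\mathrm{poly}(|x|)}$, so that $B$ and $T$ have polynomial-size binary representations, and fix a multiplier $N>B$. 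The constructed countdown game maintains the invariant that after $t$ simulated steps in configuration $c$ the counter equals $N\cdot(T-t)+\mathrm{enc}(c)$; the initial pair is $(v_0,\,N\cdot T+\mathrm{enc}(c_0))$, with $N$ large enough that the decrement attached to any single transition is positive.

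\textbf{Hardness, step gadget.}
One step of $M$ is simulated by a short block of rounds. From a vertex tagged by the current control state and head position $h$, Player~1 commits to a transition $\tau=(q,a)\to(q',b,\mathrm{dir})$ by picking the unique decrement $d^h_\tau$ it induces (a function of $\tau$ and $h$ only, accounting for rewriting the scanned cell, moving the head, and switching state), so the counter arithmetic is entirely forced. Player~2 then takes the edge labelled $d^h_\tau$ to the vertex tagged by $q'$ and $h\pm1$: at existential states the choice of $\tau$ is Player~1's, while at universal states all enabled transitions share the same decrement, so their several edges realise $M$'s branching as Player~2's choice. To stop Player~1 from applying a transition whose guard is false (the scanned cell does not really contain $a$), the block contains a point where Player~2 may \emph{challenge} the claimed symbol; a valid challenge routes the play through a checking gadget into a sink with a nonzero counter from which $V\x\{0\}$ is unreachable. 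From the vertex reached at time $t=T$ the counter equals $\mathrm{enc}(c)$, and a weight-$0$ edge from a designated ``accept'' vertex, reachable only once the block certifies $c=c_f$, enters $V\x\{0\}$; one checks this is the only way the counter can ever become $0$. Then $M$ accepts $x$ iff Player~1 wins from the initial pair.

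\textbf{Main obstacle.}
The delicate part is faithfulness, and in particular ruling out \emph{spurious} Player~1 wins: because the invariant $N(T-t)+\mathrm{enc}(c)$ is rigid and $N$ dominates every per-transition offset, each decrement Player~1 may pick is uniquely determined, so the counter can reach $0$ only at the genuine final step, and the only residual slack for Player~1 — which transition to claim at an existential state and which guard to assert — is policed by Player~2's challenge, so illegal play loses immediately. The converse direction, turning an accepting computation tree of $M$ into a winning Player~1 strategy, is routine. I expect the bulk of the work to be the design of the checking gadget (extracting the relevant base-$k$ digit of the counter using only countdown moves, with Player~2 choosing when to invoke it) and the case analysis showing that no play escapes the invariant.
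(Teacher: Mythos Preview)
The paper does not prove this proposition; it is quoted from \cite{JLS2007} and used as a black box. Your sketch is therefore not being compared against anything in the present paper, but it does follow the same blueprint as the original hardness proof in \cite{JLS2007}: encode the tape of a polynomial-space alternating Turing machine as a number held in the counter, keep control state and head position in the game vertex, and let Player~2 challenge Player~1's claimed read symbol.

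A few concrete issues with your outline. First, countdown edges carry strictly positive weights (the paper takes $E\subseteq V\times\+N_{>0}\times V$), so your ``weight-$0$ edge from the accept vertex'' is illegal; it is also unnecessary, since you already arranged $\mathrm{enc}(c_f)=0$, so at time $T$ in configuration $c_f$ the counter is already~$0$ and Player~1 has won. Second, the assertion that ``at universal states all enabled transitions share the same decrement'' is not automatic: distinct transitions from the same universal state may write distinct symbols and hence induce different changes to the tape encoding, and then Player~1's single choice of $d$ cannot expose them all to Player~2. This is repaired by a standard normalisation of the ATM (arrange that the universal branching affects only the control state, which lives in the vertex rather than the counter), but you should say so explicitly. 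Third, and this is the real gap, you defer the checking gadget entirely. That gadget is the heart of the reduction: one must exhibit, using only positive decrements and the alternation structure of a countdown game, a sub-game that Player~1 wins if and only if the digit at position~$h$ of the current counter value (in base~$k$) equals the claimed symbol~$a$. Everything else in the reduction is bookkeeping; without this construction the sketch is a plan, not a proof.
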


\newcommand{\HEAVEN}{\texttt{Heaven}}
\newcommand{\HELL}{\texttt{Hell}}
\newcommand{\DIV}[2]{{#1}.{#2}}

\section{The Reduction}
\label{sec:gadgets}
In order to reduce \textsc{CountdownGame} to \textsc{PopulationControl} we first 
observe that the number of turns in a Countdown Game cannot exceed the initial value of the counter, as the initial counter value decreases at each turn. Thus, if Player~2 has a winning strategy, choosing actions at random yields a positive probability of applying that strategy, hence a positive probability of winning. Therefore Player~1 wins the initial game if, and only if, she wins with probability one against a randomized adversary.

The main idea for our further construction 
is to require Player~1 to move components one-by-one 
away from a waiting state, first into 
the control graph of the Countdown Game, and ultimately into the goal.
To avoid a loss in the intermediate phase she needs to win an instance of that game against a randomizing opponent.
This is enforced using a combination of gadgets, including two
binary counters that 
can effectively test for zero, be set to specific numbers,
and that are set up so that they can decrement at the same rate.
As a result, Player~1 has a winning strategy for the two-player Countdown Game if, and only if, the controller
can synchronize the $n$-fold product of the constructed MDP
for all $n$.

\medskip
For a given Countdown Game $\?G$ with an initial pair~$(v_0,c_0)$ we construct an MDP $\?M$ as follows.
We write that action $a$ \emph{takes state} $s$ to \emph{successor} $t$ to mean that $\delta(s,a)(t)>0$.
The exact probability distributions $\delta$ do not matter in our construction so we let $\delta(s,a)$ be the uniform distribution over such successors.

Whenever action $a$ takes state $s$ only back to itself we say that $s$ \emph{ignores} $a$.
There are states $\HEAVEN$ (the target) and $\HELL$ which ignore all actions.
For a given state $s$, an action $a$ is \emph{angelic} if it takes $s$ only to $\HEAVEN$,
and \emph{daemonic} if it takes $s$ to $\HELL$.
An action $a$ is \emph{safe} in a configuration if it is not daemonic for any marked state (in any gadget).

\begin{figure}[t]
\begin{subfigure}[T]{0.4\textwidth}
\begin{tikzpicture}[
  node distance=0.75cm and 1.25cm,
  ]
\node[estate] (I)                    {\texttt{wait}};
\node[estate]         (R)  [right= of I]      {\texttt{ready}};

\path  (I.north east) edge node[anode,auto] {$\texttt{wait}$}(R.north west);
\path  (R.south west) edge node[anode,auto] {$\texttt{wait}$}(I.south east);

\path  (I) edge[loop left] node[anode]{$\texttt{wait}$} (I);
\end{tikzpicture}

\end{subfigure}
\begin{subfigure}[T]{0.6\textwidth}
\hfill
\begin{tikzpicture}[
  node distance=0.75cm and 1.5cm,
  ]

\node[estate] (1)  {$W$};
\node[estate,right= of 1] (2)  {$G$};
\node[estate,right= of 2] (3)  {$A$};
\node[estate,right= of 3] (4)  {$B$};

\path (1) edge[loop left] node[anode,auto] {$\texttt{wait}$}(1);
\path (1.north east) edge node[anode,auto] {$\texttt{go}$} (2.north west);
\path (2.north east) edge node[anode,auto] {$\Sigma_{\?G}$} (3.north west);
\path (3.south west) edge node[anode,auto] {$\texttt{next}$} (2.south east);
\path (4.south west) edge node[anode,below] {$\Sigma_{AC}$}(3.south east);
\path (3.north east) edge node[anode,above] {$\Sigma_{MC}$}(4.north west);
\path (2.south west) edge node[anode,auto] {$\texttt{win}$} (1.south east);

\end{tikzpicture}

\end{subfigure}
\caption{The waiting (on left) and
  the control gadgets (on right).
Edges labelled by $\Sigma_X$ are shorthand
    for several edges, one for each action in $\Sigma_X$.
    All but the depicted actions are daemonic.
}
\label{fig:control}
\label{fig:waiting-room}
\end{figure}
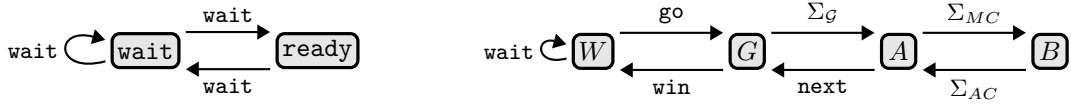
Besides the special states $\HEAVEN$ and $\HELL$, $\?M$ contains several gadgets described below.
\vspace{-1em}
\mypar{Waiting.}
\newcommand{\WaitAct}{\texttt{wait}}
\newcommand{\WaitState}{\texttt{Wait}}
\newcommand{\ReadyState}{\texttt{Ready}}
The waiting gadget
has two states $\WaitState$ and $\ReadyState$
which react to the action $\texttt{wait}$
as depicted in Figure~\ref{fig:waiting-room} (left).
Whenever a configuration marks one of these states,
a strategy that continuously plays $\WaitAct$ will almost-surely reach a configuration
in which exactly one component marks $\ReadyState$.

A special action $\texttt{go}$ (to indicate successful isolation of one component) takes
$\ReadyState$ to the initial state $v_0$ of the game $\?G$.
All other actions (in gadgets described below) are ignored.

\mypar{Game.}
The game $\?G=(G,E)$ is directly interpreted as MDP: 
For every edge $(s,d,s')\in E$ there is an action $(s,d)$ 
which takes $s$ to $s'$ 
and which is daemonic for all states $s'\neq s$.

The action $\texttt{win}$ is is angelic for every state of $G$.  All other actions are ignored.

\mypar{Binary Counters.}
A ($k$-bit) Counter consists of states $\BIT{i}{j}$ for all $0\le i< k$ and $j\in\{0,1\}$.
For every bit $i$ there is a decrement action $\dec{i}$ which
\begin{itemize}
    \item takes $\BIT{j}{0}$ only to $\BIT{j}{1}$ for all $0\le j < i$,
    \item takes $\BIT{i}{1}$ only to $\BIT{i}{0}$,
    \item is daemonic for $\BIT{i}{0}$, and
    \item is ignored by all $\BIT{j}{l}$, for all $i<j$ and $l\in\{0,1\}$.
\end{itemize}

We say that a configuration \emph{holds} the number 
$c<2^k$ in this counter if 
it marks those states that represent the binary expansion of $c$:
for all~$0\le i\le k-1$,
state $\BIT{i}{j}$ is marked iff
the $i$th bit in the binary expansion of $c$ is~$j$. 
An action~$a$ \emph{sets} the counter to number $d$
if for all $0\leq i<k$, 
it takes $\BIT{i}{0}$ to only $\BIT{i}{j}$ where $j\in \{0,1\}$ is the $i$th bit in the binary expansion of~$d$,
and is daemonic for all $\BIT{i}{1}$ (to ensure that the counter can only be set if it holds~$0$).
Observe that if a counter holds $c$ then there is a unique maximal sequence of safe decrement actions,
that has length $c$ and after which the counter holds $0$.

Additionally, for every bit $i$ the gadget has an error action $\error{i}$,
which is daemonic for $\BIT{i}{0}$ and $\BIT{i}{1}$, and 
angelic for every other state (of $\?M$).
These actions can be used to quickly synchronize any configuration in which the counter is not correctly initialized,
i.e., does not hold a number.
See Figure~\ref{fig:counter} for a depiction of a $4$-bit counter.

\medskip
The MDP $\?M$ will contain two distinct counter gadgets. 
A main counter $MC$ 
has 
$\log_2(n_0)$ bits to hold possible counter values of the Countdown Game.
An auxiliary counter $AC$ has $\log_2(d_{\texttt{max}})$ many bits to hold the largest edge weight~$d_{\texttt{max}}$ in $\?G$.
These have distinct sets of states and actions,
so for clarity, 
we write $\DIV{C}{x}$ to refer to state (or action) $x$ in gadget $C$.
We connect some new actions to these two counters as follows.
\begin{itemize}
    \item The action $\texttt{go}$ sets $MC$ to~$n_0$;
this ensures that $MC$ holds $n_0$ when starting to simulate~$\?G$.
\item The action $\texttt{win}$ is daemonic for every state $\DIV{MC}{\BIT{i}{1}}$.
 This enforces that the $MC$ must hold $0$ when a strategy claims Player~1 wins $\?G$.
\item Any action~$(v,d)\in \Sigma_{\?G}$ sets $AC$ to $d$;
\item The action $\texttt{next}$ is daemonic for every state $\DIV{AC}{\BIT{i}{1}}$.
This enforces that a strategy must first count down from~$d$ to~$0$ before it can simulate the next move in~$\?G$.
\end{itemize}
\begin{figure}[t]
    \centering
\begin{tikzpicture}[
  node distance=1cm and 2.75cm,
  ]

\node[estate] at (0,0) (Z0)  {$\Bit{0}{0}$};
\node[left= of Z0, estate] (Z1)  {$\Bit{1}{0}$};
\node[left= of Z1, estate] (Z2)  {$\Bit{2}{0}$};
\node[left= of Z2, estate] (Z3)  {$\Bit{3}{0}$};

\node[above=  of Z0, estate] (O0)  {$\Bit{0}{1}$};
\node[left= of O0, estate] (O1)  {$\Bit{1}{1}$};
\node[left= of O1, estate] (O2)  {$\Bit{2}{1}$};
\node[left= of O2, estate] (O3)  {$\Bit{3}{1}$};

\path  (Z0.north west) edge node[anode,align=left,auto] {$\Dec{1}$,\\$\Dec{2}$,\\$\Dec{3}$ }(O0.south west);
\path  (O0.south east) edge node[anode,right] {$\Dec{0}$}         (Z0.north east);
\path  (Z1.north west) edge node[anode,align=left,left] {$\Dec{2}$,\\$\Dec{3}$}(O1.south west);
\path  (O1.south east) edge node[anode,auto] {$\Dec{1}$}         (Z1.north east);
\path  (Z1) edge[loop left] node[anode,auto] {$\Dec{0}$}(Z1);
\path  (O1) edge[loop left] node[anode,auto] {$\Dec{0}$}(O1);

\path  (O2.south east) edge node[anode,auto] {$\Dec{2}$}         (Z2.north east);
\path  (Z2) edge[loop left] node[anode,auto,align=left] {$\Dec{0}$,\\$\Dec{1}$}(Z2);
\path  (O2) edge[loop left] node[anode,auto,align=left] {$\Dec{0}$,\\$\Dec{1}$}(O2);
\path  (Z2.north west) edge node[anode,align=left,left] {$\Dec{3}$ }(O2.south west);

\path  (O3.south east) edge node[anode,auto] {$\Dec{3}$}         (Z3.north east);
\path  (Z3) edge[loop left] node[anode,auto,align=left] {$\Dec{0}$,\\$\Dec{1}$,\\$\Dec{2}$}(Z3);
\path  (O3) edge[loop left] node[anode,auto,align=left] {$\Dec{0}$,\\$\Dec{1}$,\\$\Dec{2}$}(O3);
\end{tikzpicture}

\caption{A (4-bit) Binary Counter.
    Not displayed are edges labelled by $\dec{i}$ that make the respective
    actions daemonic for state $\BIT{i}{0}$, and error actions $\error{i}$,
    which are daemonic for $\BIT{i}{0}$ and $\BIT{i}{1}$, for all bits
    $i\in\{0,1,2,3\}$.
}
\label{fig:counter}
\end{figure}
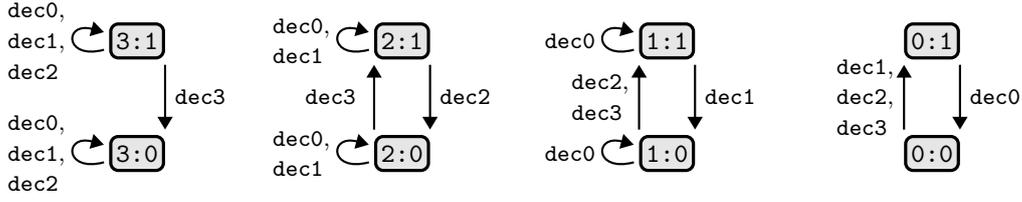
\mypar{Control.}
The control gadget
will enforce that a synchronizing strategy proposes actions in a proper order; see Figure~\ref{fig:control}.
It consists of states $W,G,A,B$, and contains actions of all gadgets above (including $\texttt{go}$,  $\texttt{win}$, $\texttt{next})$
and a new $\error{}$ action, which is angelic for all states except~$W$, for which it is daemonic.
All omitted edges in~Figure~\ref{fig:control} are daemonic.

\mypar{Start/End.}
To complete the construction of~$\?M$, we introduce 
an initial state $\StartState$ and
actions~$\texttt{start}$ and $\texttt{end}$.
The action $\texttt{start}$ 
takes $\StartState$ to $\WaitState$ (Waiting gadget), $W$    
(Control gadget), and all $\BIT{i}{0}$ states of counters $AC$ and $MC$.
It is daemonic for every other state.

The  action $\texttt{end}$ is daemonic for $\WaitState$ and $\ReadyState$,
and angelic for every other state in~$\?M$.

\begin{lemma}
    \label{theorem}
    $\?M^{(n)}$ is synchronizable for all $n\in\+N$ iff Player~1 wins $\?G$.
\end{lemma}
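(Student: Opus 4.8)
The statement is an equivalence, so I would prove the two implications separately. Both use the elementary fact recorded at the start of this section: a play of a Countdown Game lasts at most $c_0$ rounds, so Player~1 wins $\?G$ from $(v_0,c_0)$ if and only if she reaches counter value~$0$ with probability one against an adversary that resolves her non-deterministic choices uniformly at random.

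\smallskip\noindent\emph{If Player~1 wins $\?G$, then $\?M^{(n)}$ synchronizes for all $n$.} Fix a winning strategy $\tau$ for Player~1, which may be assumed positional, so $\tau(v,c)$ is a weight $d\le c$ with an edge $(v,d,\cdot)\in E$. Given $n$, the plan is to exhibit a memoryless synchronizing strategy $\sigma$ for $\?M^{(n)}$. It first plays $\texttt{start}$ and disposes of the degenerate outcomes at once: if some counter bit is unmarked it plays the matching $\error{i}$, if $W$ is unmarked it plays $\error{}$, and if no waiting-gadget state is marked it plays $\texttt{end}$ -- each reaching $\mathbf{End}$ in a single step. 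Otherwise both counters hold~$0$, $W$ is marked, and the waiting gadget is nonempty, and $\sigma$ proceeds in rounds. In a round it plays $\texttt{wait}$ until exactly one component marks \ReadyState\ (almost sure, as noted in the construction of the waiting gadget), then $\texttt{go}$, which moves that component to $v_0$, the control component to $G$, and sets $MC$ to~$c_0$. It then simulates $\tau$: at control state $G$ with the game component at $v$ and $MC$ holding $m$ it plays the game action $(v,\tau(v,m))$, moving the game component along an edge $(v,\tau(v,m),v')$, moving control to $A$, and setting $AC$ to $d:=\tau(v,m)\le m$; then it performs the alternation forced by the control gadget -- decrement the least set bit of $MC$ (control $A\to B$), then decrement the least set bit of $AC$ (control $B\to A$) -- exactly $d$ times, leaving $MC$ holding $m-d$, $AC$ holding~$0$, control at $A$, and plays $\texttt{next}$ to return to $G$. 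Since $\tau$ wins within $c_0$ moves whatever the adversary does, the simulation reaches control state $G$ with $MC$ holding~$0$; then $\texttt{win}$ sends the game component to \HEAVEN\ and the control component back to $W$, ending the round with one fewer component in the waiting gadget. Once the waiting gadget is empty, $\texttt{end}$ carries the remaining components -- those at $W$ and, all holding~$0$, on the counter bits -- to \HEAVEN. The only randomness along $\sigma$ is the $\texttt{start}$-scatter (covered by the case split), the almost surely terminating isolation step, and the game edges (covered since $\tau$ beats every adversary choice), so $\mathbf{End}$ is reached with probability one and $\?M^{(n)}$ synchronizes.

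\smallskip\noindent\emph{If $\?M^{(n)}$ synchronizes for all $n$, then Player~1 wins $\?G$.} I would prove the contrapositive. If Player~1 does not win, then by determinacy of the finite-duration game Player~2 has a winning strategy $\rho$, and since the arena has bounded out-degree and plays last at most $c_0$ rounds, a uniformly random adversary follows $\rho$ for a whole play with probability at least some $p_0>0$; hence, against randomness, every strategy of Player~1 reaches with probability at least $p_0$ a position $(v,c)$ with $c>0$ and no available move. Choose $n$ so large that, with positive probability, $\texttt{start}$ produces a configuration $\vec{q}_0$ carrying at least two components on \WaitState, at least one on $W$, and every counter bit marked (so both counters hold~$0$); it then suffices to show no strategy synchronizes $\vec{q}_0$ almost surely. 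The plan is to fix an invariant on the configurations reachable from $\vec{q}_0$ along plays that retain positive probability of ever reaching $\mathbf{End}$ (the \emph{surviving} plays): both counters remain well-formed, the control component travels through $\{W,G,A,B\}$ only in the cyclic order prescribed by the gadget, all game-graph components sit on a common vertex, and (the value of $MC$, that vertex) is always a position reachable in $\?G$ from $(v_0,c_0)$; and then to check that every action safe in such a configuration preserves it. The payoff is that a component leaves the waiting gadget for good only by passing through \ReadyState, then via $\texttt{go}$ to $v_0$, and from there through the game graph to \HEAVEN, the last step being $\texttt{win}$, which is daemonic unless $MC$ holds~$0$ -- i.e.\ unless the simulated Countdown position has been driven to counter~$0$; and during such an escort the remaining component(s) of $\vec{q}_0$ still sit in the waiting gadget, so $\texttt{end}$ is daemonic and a game component stuck at a dead position stays stuck forever. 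Combining this with the first observation of this case, each escort reaches counter~$0$ with probability strictly below~$1$; since $\vec{q}_0$ forces at least one escort to be completed while the waiting gadget is still occupied, with positive probability that escort fails and $\mathbf{End}$ is never reached. Thus $\vec{q}_0$, hence $\?M^{(n)}$, is not synchronizable.

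\smallskip\noindent\emph{Main obstacle.} The delicate step is this invariant in the second direction: showing that the controller cannot win \emph{spuriously} -- reach $\mathbf{End}$ without every escorted component actually winning the Countdown game. This needs a careful case analysis, per phase and per gadget, exploiting the daemonic transitions built into the construction -- that $\error{i}$ is daemonic on a well-formed counter, that $\texttt{end}$ is daemonic on an occupied waiting gadget, that the counter-setting actions are daemonic unless the counter holds~$0$, and that every edge omitted from the control gadget is daemonic -- both to pin down the faithful simulation as the unique surviving behaviour and to show that emptying the waiting gadget is unavoidable and cannot be short-circuited, e.g.\ by escorting several components at once or by triggering $\texttt{end}$ or an error action prematurely.
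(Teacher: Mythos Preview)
Your proposal is correct and follows essentially the same approach as the paper: the forward direction is virtually identical, and the reverse direction proceeds, as in the paper, by arguing that from a well-initialised post-\texttt{start} configuration the control gadget forces a faithful simulation of the Countdown Game which, since Player~1 loses, must with positive probability reach a configuration in which every available action is daemonic. Your treatment of the reverse direction is in fact more careful than the paper's own sketch --- you explicitly invoke determinacy and the bounded horizon to get $p_0>0$, you choose $n$ large enough rather than claiming the good initialisation has positive probability for every $n$, and you frame the ``no spurious win'' argument as an invariant on surviving plays --- but the underlying route is the same.
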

\begin{proof}

Suppose Player~1 wins the game~$\?G$.
Fix $n$. Recall that in $\?M^{(n)}$  all components of the initial configuration  mark~$\StartState$. A synchronizing strategy proceeds as follows:
\begin{itemize}
        \item Play $\texttt{start}$ to initialize the Waiting and Control gadgets, and to set~$AC$ and $MC$ to~$0$.
        If any of the gadgets is not correctly initialized afterwards, play the respective error action to win directly.
           For instance, if $W$ is unmarked, play $\error{}$ to synchronize.
        \item 
            Reduce the number of components marking $\WaitState$ one by one 
            until a configuration is reached in which $\WaitState$ is not marked.
            Once this is true, play~$\texttt{end}$ to synchronize.
        \item To reduce the number of components marking $\WaitState$,
            isolate one of them, and move it to $\HEAVEN$ by simulating the Countdown Game:
            \begin{enumerate}
            \item Play $\WaitAct$ until only a single component marks $\ReadyState$, then play $\texttt{go}$.
                This will mark~$v_0$ in the game gadget and sets~$MC$ to $n_0$. Recall that $(v_0,n_0)$ is the initial pair of~$\?G$.
            \item Simulate rounds of the game~$\?G$:
           assume state~$v$ in the game gadget is marked and
           the counter~$MC$ holds~$c$, then let $d$ be
           the the number Player~$1$ plays to win
           from the pair~$(v,c)$ in~$\?G$. Play~$(v,d)$.
           This action will  set  $AC$ to $d$.
           Alternate between (safe) decrement actions in $AC$ and $AB$ until they hold $0$ and $c-d$, respectively.
           Play $\texttt{next}$.
       \item The above simulation of rounds in~$\?G$ is repeated until both $AC$ and $AB$ hold $0$, by assumption that Player~1 wins~$\?G$ this is possible. At this point it is safe to play $\texttt{win}$.
            \end{enumerate}
    \end{itemize}

   Conversely, assume that Player~1 cannot win~$\?G$.
    Suppose that after the (only possible) initial move $\texttt{start}$, all gadgets are correctly initialized.
    Clearly, for every $n$, this event has strictly positive probability. We argue that no strategy can synchronize
    such a configuration. Indeed, a successful strategy had to play a sequence in
    $\WaitAct^*\cdot \texttt{go}$ first, 
    followed by actions in $(\Sigma_{\?G}\cdot(\Sigma_{AC}\cdot\Sigma_{MC}\cdot\texttt{next})^*)^*$,
    by construction of the control gadget.
    If after playing $\texttt{go}$, more than one component mark~$v_0$, there is a non-zero chance that
    these will diverge, making subsequent actions in $\Sigma_{\?G}$ unsafe.
    If exactly one component marks~$v_0$ then the second sequence of actions (assuming all actions are safe) corresponds to a  play of~$\?G$.
    This inevitably leads to a configuration in which
    counter $MC$ holds $0$ and the control enforces that the next action is in $\Sigma_{MC}$.
    But any such action will be daemonic for some state in $MC$ and thus not be safe.
    We conclude that every strategy will lead to a configuration that  at least one component marks $\HELL$
    and thus cannot be synchronized.    
\end{proof}

Our claim follows immediately from \cref{lemma,theorem}.
\begin{theorem}
    \textsc{PopulationControl} is \EXPTIME-hard.
\end{theorem}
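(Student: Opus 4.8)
The plan is to exhibit a polynomial-time many-one reduction from \textsc{CountdownGame}, which is \EXPTIME-hard by \cref{lemma}. Given a Countdown Game $\?G=(V,E)$ together with an initial pair $(v_0,c_0)$, the reduction outputs the MDP $\?M$ constructed in \cref{sec:gadgets}, with $\StartState$ as designated initial state and $\HEAVEN$ as designated final state. Correctness is then immediate from \cref{theorem}: $\?M^{(n)}$ is synchronizable for every $n$ --- i.e.\ $\?M$ is a positive instance of \textsc{PopulationControl} --- if and only if Player~1 wins $\?G$ from $(v_0,c_0)$. So the only remaining task is to verify that the map $(\?G,v_0,c_0)\mapsto\?M$ is computable in polynomial time.

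For that I would bound the size of $\?M$. The waiting and control gadgets have constantly many states; the game gadget contributes one state per vertex and one action per edge of $\?G$, so $O(|V|+|E|)$ in total. The two binary counters are the only components that could a priori blow up, but $MC$ uses only $\lceil\log_2 c_0\rceil$ bits and $AC$ only $\lceil\log_2 d_{\texttt{max}}\rceil$ bits, where $d_{\texttt{max}}$ is the largest edge weight of $\?G$; since all constants in a \textsc{CountdownGame} instance are written in binary, both bit-widths are linear in the input size, and a $k$-bit counter has $2k$ states together with $O(k)$ decrement and error actions. The remaining actions ($\texttt{start}$, $\texttt{end}$, $\texttt{go}$, $\texttt{win}$, $\texttt{next}$, $\error{}$, and one $(s,d)$ per edge $(s,d,s')\in E$) are polynomially many, and for each pair of a state and an action the set of successors --- and hence the uniform distribution $\delta$ --- is prescribed explicitly and locally by the gadget rules, so the whole transition table can be written down in polynomial time. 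Therefore the reduction runs in polynomial time, and \EXPTIME-hardness of \textsc{PopulationControl} follows by combining it with \cref{lemma}.

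I expect no real obstacle in this final step: all of the substance lives in \cref{theorem}, which we are free to assume. The one point worth flagging is the role of the binary encoding above --- it is exactly what keeps the counters (and hence $\?M$) of polynomial size while still letting them represent the exponentially many counter values that occur along a play of the Countdown Game, which is the reason the reduction yields \EXPTIME-hardness rather than merely a weaker lower bound. It is also worth recalling, from the proof of \cref{theorem}, why soundness survives the move to a randomizing adversary: the error actions attached to every gadget allow the controller to synchronize at once from any badly initialised configuration, so a bad initialisation --- which, if it is reachable at all, is reached with probability one against the random opponent --- can never be turned into a spurious win, which is precisely the loophole in the construction of \cite{BDGGG2019} pointed out in the footnote. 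Nothing beyond \cref{lemma,theorem} is needed here.
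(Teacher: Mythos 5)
Your proof is correct and takes exactly the same route as the paper, which simply notes that the theorem follows immediately from \cref{lemma} and \cref{theorem}; the only addition is your (correct and routine) verification that the construction of $\?M$ is polynomial-time, hinging on the binary counters having only logarithmically many bits in the (binary-encoded) constants $c_0$ and $d_{\texttt{max}}$.
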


\vfill
\bibliography{journals,conferences,autocleaned}
\end{document}